\newtheorem{lemma}{Lemma}
\theoremstyle{definition}
\def\BState{\State\hskip-\ALG@thistlm}
\begin{document}
%
\title{Biased Contribution Index: A Simpler Mechanism  to Maintain  Fairness in Peer to Peer  Network }
%
%
%

\author{Sateesh Kumar Awasthi~and~Yatindra~Nath~Singh, \IEEEmembership{Senior~Member,~IEEE}
\thanks{Sateesh Kumar Awasthi  is with the Department
of Electrical Engineering, Indian Institute of Technology, Kanpur, India, e-mail: (sateesh@iitk.ac.in)}
\thanks{Yatindra~Nath~Singh  is with the Department
of Electrical Engineering, Indian Institute of Technology, Kanpur, India, e-mail: (ynsingh@iitk.ac.in)}}
\maketitle

\begin{abstract}
To maintain fairness, in the terms of resources shared by an individual peer, a proper incentive policy is required in a peer to peer network. This letter proposes, a simpler mechanism to rank the peers based on their resource contributions to the network. This  mechanism  will  suppress the free riders from downloading the resources from the network. Contributions of the peers are biased in such a way that  it can balance the download and upload amount of resources at each peer. This mechanism can be implemented in a distributed system and it converges much faster than the other  existing approaches. 
\end{abstract}

\begin{IEEEkeywords}
Non-negative matrix, Eigenvector, Free Rider.
\end{IEEEkeywords}

%
\IEEEpeerreviewmaketitle

\section{Introduction}
%
%
%
%
\IEEEPARstart{T}{he} peers are motivated to share the resources in a peer to peer network, if they get at least the amount of data, what they have uploaded to the network. Most ideal situation is that, when upload and download amount for each peer is same. We call this, a fair situation in a peer to peer network. Of course, there will be no free riders in this situation. Therefore some  mechanism is required to assure this. In this letter, we are proposing a simple mechanism called biased contribution index(BCI).\par
To achieve the aforementioned state, many incentive mechanisms have been studied \cite{global}, \cite{shared}, \cite{tit}, \cite{mtit}, \cite{give}. Among these,   global approaches \cite{global}, \cite{shared} performed better compared to local approaches \cite{tit}, \cite{mtit}, \cite{give}. In global approaches, shared history of peers   in entire network is taken into consideration. It gives the wider view of peer's cooperation in the network. However, global approaches are not trivial to implement in the network. Like in \cite{shared}, each peer needs to keep the record of every transaction history, regardless of whether he was directly involved in them or not. In \cite{global}, peer's contribution largely depends upon the contribution of peer with whom  it is transacting so each  peer needs to make some rough estimate of the impact on its contribution before each transaction. Our approach is similar to \cite{global}, but  equal importance is given to the actual contribution of peer and the contribution of peer with whom it is transacting. It is simpler to implement in the network due to its faster convergence. With the suitable numerical example we compared the convergence result of our approach with \cite{global}. We found that our approach performs far better in all the cases.\par
Rest of this paper is organized as follows. Section \ref{section1} presents the network model and introduction of biased contribution index.  Solution of biased contribution index is given in section \ref{solution}. Analysis of proposed algorithm is discussed in section \ref{analysis}. Section \ref{result} shows the numerical results, and in section \ref{con}, conclusion is presented.


\section{Network model and biased contribution Index }\label{section1}
Let there be $N$ peers in a peer to peer network. Peers share their resources with each other and their contribution is calculated globally. A simple metric which can best reflect the contribution of the peers in the network, could be the ratio of its total upload to the network to the total download from the network. But to motivate the peers to upload  more to the peers contributing more and to download more from the peers contributing less, we need to bias this ratio by some incentive factor. Let this incentive factor be, $x_i$, for peer $i$.  For peer $i$, we define, the upload to download ratio(biased ratio)as
\[R_i= \mathbf{\frac{e_i.s.x}{e_i.s^T.x}}.\]

 

Here, $\mathbf{s}$ is $NXN$ share matrix. Its $ij^{th}$ element represents the  amount of resource shared by peer $i$ to peer $j$. $\mathbf{e_i}$ is the row vector with its $i^{th}$ entry as '1' and all other entries as  zero. The $\mathbf{x}$ is a vector containing incentive factors. Let us define the incentive factor as a monotonically increasing function of biased ratio, i.e.
\[x_i= \frac{R_i}{1+R_i},\]
\[=\frac{\mathbf{e_i.s.x}}{\mathbf{e_i.s.x + e_i. s^T.x}}.\]
We call this incentive factor as  biased contribution index (BCI). \par Now to start the process of sharing we need to give some initial value of BCI to all the peers. Let us define a parameter $\alpha \in (0, 1)$ to decide the initial value of BCI. Later we will see that the parameter $\alpha$ is also related to the speed of convergence. The biased contribution index is modified to include this parameter $\alpha$, and is given by.
\begin{equation}\label{equ1}
x_i= \begin{cases}
\alpha \frac{\mathbf{e_i.s.x}}{\mathbf{e_i.s.x + e_i. s^T.x}} + (1-\alpha), &\text{if $\mathbf{e_i.s.x + e_i. s^T.x} \neq 0$}.\\
(1-\alpha/2), &{otherwise}.
\end{cases}
\end{equation}
Here, $(1-\alpha/2)$ is the initial value of biased contribution index, when neither upload nor download has happened at the node. In the network of $N$ nodes, there will be $N$ unknowns and $N$ nonlinear equations. We will see in the next section that, these equations can be solved by a suitable iterative function.\par 
   Peers are allowed to take the resources from network only if their biased contribution index is  above a certain threshold value. Therefore  every peer will try to increase its biased contribution index, so that it can get the required amount of resources whenever needed.\par It can be observed easily from equation \ref{equ1} that a peer's biased contribution index will be higher if\\
 1). Its contribution $s_{ij}$ is higher,\\
 2). It shares more of its resources with higher contributing peer,  and\\
 3). It takes more of the services from lower contributing peer.\\
 Therefore, intuitively we can say that this metric can assure fairness in the whole network. Later in section \ref{analysis}, we will give a mathematical justification for this.


\section{Solution of Biased Contribution index}\label{solution}
The BCI of any peer is expressed in the terms of the BCI of other peers. If $\mathbf{e_i.s.x + e_i. s^T.x} \neq 0$, for $i=1,2....N$, then for $N$ peers network, the equation \ref{equ1},  can be expressed in the form of matrix.
\[\mathbf{x}=\mathbf{diag}[d_1,d_2,...d_N].\mathbf{s.x} + (1-\alpha)\mathbf{e}\]
Here $d_i= \alpha/(\mathbf{e_i.s.x + e_i. s^T.x})$ and $\mathbf{e}$ is vector with each element as '1'. We  propose following Lemmas in this regard. 
\begin{lemma}\label{lemma0}
The biased contribution index vector $\mathbf{x} \in [(1-\alpha), 1]^N$
\end{lemma}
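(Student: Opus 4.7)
The plan is to prove the bound by interpreting equation \ref{equ1} as defining a fixed-point map $T:\mathbb{R}^N \to \mathbb{R}^N$ and showing that $T$ carries the hypercube $[1-\alpha,1]^N$ into itself; since the BCI vector is produced by iterating $T$ from a starting point already inside this cube (either all ones, or the initialisation $(1-\alpha/2)$ specified below equation \ref{equ1}), every iterate, and hence the limit $\mathbf{x}$, must lie in $[1-\alpha,1]^N$.

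First I would record the elementary facts underlying the map. The share matrix $\mathbf{s}$ has non-negative entries by definition (each $s_{ij}$ is an amount of resource shared), so for any $\mathbf{x}$ with non-negative components both $\mathbf{e_i.s.x}$ and $\mathbf{e_i.s^T.x}$ are non-negative. Writing $a=\mathbf{e_i.s.x}\ge 0$ and $b=\mathbf{e_i.s^T.x}\ge 0$, the quotient $a/(a+b)$ (when $a+b\ne 0$) lies in $[0,1]$. Consequently, the right-hand side of the first branch of equation \ref{equ1} is a convex combination: $\alpha\cdot\tfrac{a}{a+b}+(1-\alpha)$, which takes values in $[1-\alpha,\,1]$. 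For the degenerate second branch, $x_i=1-\alpha/2$, and since $\alpha\in(0,1)$ we have $1-\alpha<1-\alpha/2<1$, so this value also lies in $[1-\alpha,1]$.

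Next I would put these pieces together. Starting from any $\mathbf{x}^{(0)}\in[1-\alpha,1]^N$ (in particular, $\mathbf{x}^{(0)}\ge 0$), applying $T$ gives $\mathbf{x}^{(1)}=T(\mathbf{x}^{(0)})$ whose components are either of the convex-combination form above or equal to $1-\alpha/2$; either way each component lies in $[1-\alpha,1]$. Thus $T([1-\alpha,1]^N)\subseteq[1-\alpha,1]^N$, and by induction every iterate $\mathbf{x}^{(k)}$ stays in this closed set. Since the BCI vector $\mathbf{x}$ is the limit of such iterates (this iterative scheme is exactly what is developed in the next lemmas of Section \ref{solution}), the limit inherits the bound and $\mathbf{x}\in[1-\alpha,1]^N$.

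The calculation itself is routine; the only conceptual point that needs care is the implicit positivity assumption. The bound $a,b\ge 0$ relies on the iterate being non-negative, which is why framing the argument as invariance of the cube under $T$ (rather than trying to bound a pre-existing solution directly) is the cleanest route. I do not expect any other obstacle, since no spectral or contraction property of $T$ is needed for this lemma---only the convex-combination structure of equation \ref{equ1}.
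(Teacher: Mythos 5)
Your proposal is correct and rests on the same elementary observation as the paper's proof: writing $a=\mathbf{e_i.s.x}$ and $b=\mathbf{e_i.s^T.x}$, the first branch of equation \ref{equ1} is the convex combination $\alpha\,\frac{a}{a+b}+(1-\alpha)$ with $\frac{a}{a+b}\in[0,1]$, hence $x_i\in[1-\alpha,1]$. The paper presents exactly this by evaluating the two extreme cases (a pure consumer gets $1-\alpha$, a pure contributor gets $1$) and asserting everything else lies in between; it silently assumes the solution $\mathbf{x}$ is nonnegative so that $a,b\ge 0$. You instead package the bound as invariance of the cube $[1-\alpha,1]^N$ under the fixed-point map and pass to the limit of the iterates, which is the cleaner way to discharge that positivity assumption, and you also explicitly check the degenerate branch $1-\alpha/2$, which the paper omits. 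The one wrinkle is ordering: your final step invokes the fact that $\mathbf{x}$ is the limit of the iteration, but that convergence is only established in Lemma \ref{lemma1}, whose proof in turn cites Lemma \ref{lemma0} for positivity of $\mathbf{x}$. To avoid this apparent circularity you should either state the lemma for every iterate (which is all Lemma \ref{lemma1} actually needs) or apply your convex-combination bound directly to any nonnegative solution of equation \ref{equ1}, as the paper implicitly does; with that adjustment the argument is complete.
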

\begin{proof}
When any peer $i$ only takes the resources from the network and does not contribute any thing, then $\mathbf{e_i.s.x=0}$ and $\mathbf{e_i.s^T.x \neq 0}$. In this case,  biased contribution index of peer $i$ will be minimum and it will be,
\[x_i=\alpha.0 + (1-\alpha)=(1-\alpha).\]
When a peer $i$ only contribute the resources to the network without taking any thing, then $\mathbf{e_i.s.x \neq 0}$ and $\mathbf{e_i.s^T.x = 0}$. In this case,  biased contribution index of peer $i$ will be maximum and it will be,
\[x_i=\alpha.1 + (1-\alpha)=1.\]
In all other cases it will be in between these values. Hence $\mathbf{x} \in [(1-\alpha), 1]^N.$
\end{proof}
\begin{lemma}\label{lemma1}
Let $\mathbf{s}$ be $NXN$ non negative, irreducible matrix, then $\mathbf{x}$ in the above expression can be calculated by the iterative function
\[\mathbf{x^k}=\mathbf{\phi(x^{k-1})},\] where $i^{th}$ element of iterative function  $\mathbf{\phi(x^{k-1})}$ is $\alpha [\mathbf{e_i.s.x^{k-1}}/{\mathbf{(e_i.s.x^{k-1} + e_i. s^T.x^{k-1})}}] + (1-\alpha)$.
\end{lemma}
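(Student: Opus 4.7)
The plan is to cast the iteration as a fixed-point problem on the compact set $K=[(1-\alpha),1]^N$ (guaranteed by Lemma \ref{lemma0}), and then apply the Banach fixed-point theorem. First I would verify that $\phi$ is well-defined on $K$: for any $x\in K$ we have $x_j\ge 1-\alpha>0$, and the irreducibility of the nonnegative matrix $\mathbf{s}$ ensures that each row and each column has at least one positive entry, so $\mathbf{e_i.s.x}+\mathbf{e_i.s^T.x}>0$ for every $i$. Hence $\phi$ is continuous on $K$. Next I would check that $\phi(K)\subseteq K$: since each ratio $\mathbf{e_i.s.x}/(\mathbf{e_i.s.x}+\mathbf{e_i.s^T.x})$ lies in $[0,1]$, its convex-style combination $\alpha(\cdot)+(1-\alpha)$ lies in $[1-\alpha,1]$, reproducing Lemma \ref{lemma0}.

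Having established stability on $K$, the core step is to exhibit $\phi$ as a contraction. Writing $\phi_i(x)=\alpha g_i(x)+(1-\alpha)$ with $g_i(x)=\mathbf{e_i.s.x}/(\mathbf{e_i.s.x}+\mathbf{e_i.s^T.x})$, a direct algebraic manipulation (setting $\delta=x-y$) gives
\[
g_i(x)-g_i(y)=\frac{(\mathbf{e_i.s.}\delta)(\mathbf{e_i.s^T.}y)-(\mathbf{e_i.s.}y)(\mathbf{e_i.s^T.}\delta)}{(\mathbf{e_i.s.x}+\mathbf{e_i.s^T.x})(\mathbf{e_i.s.y}+\mathbf{e_i.s^T.y})}.
\]
I would bound the numerator by $\|\delta\|_\infty$ times row- and column-sums of $\mathbf{s}$ and use the uniform lower bound $x_j\ge 1-\alpha$ on the denominator to get a Lipschitz constant $L$ for $g$ on $K$. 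Consequently $\phi$ is Lipschitz with constant $\alpha L$, and the factor $\alpha<1$ in front of the nonlinear part is what is supposed to drive the contraction. By Banach's theorem, $\phi$ then has a unique fixed point $x^\ast\in K$ and $\phi^k(x^0)\to x^\ast$ for every starting vector $x^0\in K$; by continuity, $x^\ast$ satisfies equation (\ref{equ1}), establishing the Lemma.

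The main obstacle I expect is the contraction step itself: the naive $L_\infty$ estimate above yields a constant of the form $\alpha/(2(1-\alpha)^2)$, which is only below $1$ for moderately small $\alpha$ and so does not cover the whole stated range $\alpha\in(0,1)$. To close this gap I would either (i) introduce a weighted $\ell_\infty$ norm adapted to the row/column sums of $\mathbf{s}$, exploiting irreducibility to sharpen the Lipschitz bound, or (ii) argue convergence after several iterations by showing that $\phi^k$ contracts for some $k\ge 1$ (iterated-map version of Banach), or (iii) fall back on Brouwer's theorem for existence of a fixed point and then prove geometric convergence of the iterates using monotonicity of $\phi$ in $x$ componentwise together with the damping $(1-\alpha)$. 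The clean, structural argument — showing a strict contraction in one step uniformly in $\alpha$ — is where the work lies; everything else (well-definedness, invariance of $K$, continuity, and translating a fixed point back to a solution of (\ref{equ1})) is routine.
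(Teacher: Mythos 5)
Your proposal takes a genuinely different route from the paper --- a global Banach fixed-point argument on $K=[(1-\alpha),1]^N$ versus the paper's local perturbation analysis --- but it contains a gap that you yourself identify and do not close, and that gap sits exactly at the load-bearing step. Your Lipschitz estimate gives the constant $\alpha/(2(1-\alpha)^2)$, which is below $1$ only for $\alpha<1/2$; the lemma (and the paper's numerical section, which uses $\alpha=0.8$ and $\alpha=0.9$) requires the full range $\alpha\in(0,1)$. The three escape routes you sketch (weighted norm, contraction of an iterate $\phi^k$, Brouwer plus monotonicity) are each plausible but none is carried out, and it is not obvious any of them succeeds: $\phi$ is not componentwise monotone in $\mathbf{x}$ (increasing $x_j$ raises $g_i$ through $\mathbf{e_i.s.x}$ but lowers it through $\mathbf{e_i.s^T.x}$), so option (iii) in particular would need real work. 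What you do establish cleanly --- well-definedness via irreducibility, invariance of $K$, continuity, hence existence of a fixed point by Brouwer --- is correct but is the easy part; uniqueness and convergence of the iterates remain unproved for general $\alpha$.

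For comparison, the paper does not attempt a global contraction at all. It linearizes around the exact solution: writing $\delta\mathbf{x}^k=\mathbf{x}^k-\mathbf{x}$, it derives $\delta x_i^k < f_i(\delta\mathbf{x}^{k-1})[\mathbf{A_i}-\mathbf{B_i}]\delta\mathbf{x}^{k-1}$, where $\mathbf{A}$ and $\mathbf{B}$ are nonnegative irreducible matrices built from $\mathbf{s}$ that share the Perron eigenvector $\mathbf{x}$ with eigenvalue $1$ (i.e.\ $\mathbf{Ax}=\mathbf{Bx}=\mathbf{x}$), and then invokes an external result (Theorem 1 of the Absolute Trust paper) to conclude $[\mathbf{A}-\mathbf{B}]^k\to\mathbf{0}$. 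That argument is local (it assumes $\delta\mathbf{x}^{k-1}\ll\mathbf{x}$) and is itself not fully airtight as a proof of global convergence, so your instinct to seek a global statement is sound; but as submitted, your proof establishes the lemma only for $\alpha<1/2$, and you would need to either execute one of your proposed repairs or adopt the paper's spectral decomposition $\mathbf{A}-\mathbf{B}$ of the Jacobian at the fixed point to cover the stated generality.
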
 
\begin{proof}
 The $i^{th}$ element of iterative function $\mathbf{\phi(x^{k-1})}$  is 
 \[x_i^{k} = \alpha \frac{\mathbf{e_i.s.x^{k-1}}}{\mathbf{(e_i.s.x^{k-1} + e_i. s^T.x^{k-1})}} + (1-\alpha).\]              
 Let $x_i^k$ and $x_i^{k-1}$ are, far from actual solution $x_i$ by  $\delta x_i^k$ and $\delta x_i^{k-1}$ respectively, then  
 \[x_i + \delta x_i^{k} = \alpha \frac{\mathbf{e_i.s.(x + \delta x^{k-1})}}{\mathbf{[e_i.s.(x + \delta x^{k-1}) + e_i. s^T.(x + \delta x^{k-1})]}} + (1-\alpha).\]      Let $\mathbf{s+s^T =s^{'}},$ then
   \[x_i + \delta x_i^{k} = \alpha \frac{\mathbf{e_i.s.(x + \delta x^{k-1})}}{\mathbf{e_i.s^{'}.(x + \delta x^{k-1})}} + (1-\alpha).\]
   \[\delta x_i^{k} = \alpha \frac{\mathbf{e_i.s.(x + \delta x^{k-1})}}{\mathbf{e_i.s^{'}.(x + \delta x^{k-1})}} + (1-\alpha)- x_i\]  Using equation \ref{equ1},
\[\delta x_i^{k} = \alpha \frac{\mathbf{e_i.s.(x + \delta x^{k-1})}}{\mathbf{e_i.s^{'}.(x + \delta x^{k-1})}} - \alpha \frac{\mathbf{e_i.s.x}}{\mathbf{e_i.s^{'}.x }}.\] 
\[\delta x_i^{k} =\alpha \frac{\mathbf{e_i.s.x}}{\mathbf{e_i.s^{'}.x }}\Bigg[\frac{\big(1+ \frac{\mathbf{e_i.s.\delta x^{k-1}}}{\mathbf{e_i.s^.x }}\big)}{\big(1+\frac{\mathbf{e_i.s^{'}.\delta x^{k-1}}}{\mathbf{e_i.s^{'}.x }}\big)}-1 \Bigg].\] It can be observed from equation \ref{equ1}, 
\[x_i > \alpha \frac{\mathbf{e_i.s.x}}{\mathbf{e_i.s.x + e_i. s^T.x}}=\alpha \frac{\mathbf{e_i.s.x}}{\mathbf{e_i.s^{'}.x }};\] hence,
\[\delta x_i^{k} < x_i\Bigg[\frac{\big(1+ \frac{\mathbf{e_i.s.\delta x^{k-1}}}{\mathbf{e_i.s^.x }}\big)}{\big(1+ \frac{\mathbf{e_i.s^{'}.\delta x^{k-1}}}{\mathbf{e_i.s^{'}.x }}\big)}-1 \Bigg]\]
\[= \frac{x_i}{\big(1+\frac{\mathbf{e_i.s^{'}.\delta x^{k-1}}}{\mathbf{e_i.s^{'}.x }}\big)}\bigg[{\big(1+ \frac{\mathbf{e_i.s.\delta x^{k-1}}}{\mathbf{e_i.s^.x }}\big)}-{\big(1+\frac{\mathbf{e_i.s^{'}.\delta x^{k-1}}}{\mathbf{e_i.s^{'}.x }}\big)}\bigg]\]
 \[= \frac{x_i}{\big(1+\frac{\mathbf{e_i.s^{'}.\delta x^{k-1}}}{\mathbf{e_i.s^{'}.x }}\big)}\bigg[\bigg({\frac{\mathbf{e_i.s}}{\mathbf{e_i.s^.x }}\bigg)}-{\bigg(\frac{\mathbf{e_i.s^{'}}}{\mathbf{e_i.s^{'}.x }}\bigg)}\bigg]\mathbf{\delta x^{k-1}}\]
\[= \frac{1}{\big(1+\frac{\mathbf{e_i.s^{'}.\delta x^{k-1}}}{\mathbf{e_i.s^{'}.x }}\big)}\bigg[\bigg({\frac{x_i\mathbf{e_i.s}}{\mathbf{e_i.s^.x }}\bigg)}-{\bigg(\frac{x_i\mathbf{e_i.s^{'}}}{\mathbf{e_i.s^{'}.x }}\bigg)}\bigg]\mathbf{\delta x^{k-1}}\]
\[= f_i(\mathbf{\delta x^{k-1}})[\mathbf{A_i-B_i}]\mathbf{\delta x^{k-1}}\]
Where $\mathbf{A_i}$ and $\mathbf{B_i}$ are $i^{th}$ row of $NXN$ matrix $\mathbf{A}$ and $\mathbf{B}$ respectively. It can be observed about matrix $\mathbf{A}$ and $\mathbf{B}$ that, $\mathbf{Ax=x}$ and
$\mathbf{Bx=x}$.\par
 Matrix $\mathbf{A}$ and $\mathbf{B}$ are derived from matrix $\mathbf{s}$. Since matrix $\mathbf{s}$ is irreducible hence matrix $\mathbf{A}$ and $\mathbf{B}$ will also be irreducible. Elements of vector $\mathbf{x}$ are positive (see Lemma \ref{lemma0}), so for non negative matrix $\mathbf{s}$, matrix $\mathbf{A}$ and $\mathbf{B}$ will also be non negative. Therefore spectral radius of matrix $\mathbf{A}$ and $\mathbf{B}$ will be '1' and corresponding  eigen vector will be $\mathbf{x}$ (see \cite{nonnegative}).\par If $\mathbf{\delta x^{k-1} << x},$ then $f_i(\mathbf{\delta x^{k-1}} )\approx 1$. Hence,
 \[\mathbf{\delta x^k} < [\mathbf{A-B}]\mathbf{\delta x^{k-1}} < [\mathbf{A-B}]^{k}\mathbf{\delta x^{0}} \] 
 \[lim_{k\to\infty}\mathbf{\delta x^k} < lim_{k\to\infty}[\mathbf{A-B}]^k\mathbf{\delta x^0}=\mathbf{0}\] (see Theorem 1 in \cite{abs})
 
And if $\mathbf{\delta x^{k-1} > x}$, then $f_i(\mathbf{\delta x^{k-1}} ) < 1$. Hence in this case, $\delta x_i^{k-1}$ will decrease more rapidly till $\mathbf{\delta x^{k-1} << x}.$
 
 \par Hence $\mathbf{x}$ can be calculated by the aforementioned iterative function.
\end{proof}
\begin{lemma}\label{lemma2}
If $\mathbf{x}=a\mathbf{e}$ is the solution of biased contribution index then  $a$ will be $(1-\alpha/2)$. 
\end{lemma}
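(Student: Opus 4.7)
The plan is to plug $\mathbf{x}=a\mathbf{e}$ directly into the defining equation (\ref{equ1}) and exploit the global conservation identity that the total upload in the network equals the total download, i.e.\ $\sum_i \mathbf{e_i.s.e} = \sum_i \mathbf{e_i.s^T.e}$, since both sides equal $\sum_{i,j} s_{ij}$.

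First, I would substitute $\mathbf{x}=a\mathbf{e}$ into the nontrivial branch of (\ref{equ1}). Since $a$ is a common scalar, it cancels from the ratio, giving, for every peer $i$ with nonzero traffic,
\[
a \;=\; \alpha\,\frac{U_i}{U_i+D_i} \;+\; (1-\alpha),
\qquad U_i := \mathbf{e_i.s.e},\; D_i := \mathbf{e_i.s^T.e}.
\]
Next, I would clear the denominator to obtain the linear form $(a-1+\alpha)(U_i+D_i)=\alpha\,U_i$, which is now free of ratios and valid for each $i$.

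The decisive step is then to sum this identity over all peers $i=1,\ldots,N$. Because $\sum_i U_i = \sum_i D_i = \sum_{i,j} s_{ij}$, the two column sums are equal; call their common value $S$. Summation yields $2(a-1+\alpha)S = \alpha S$, and assuming the network has some nonzero sharing ($S\neq 0$) I can divide out $S$ and solve to get $a-1+\alpha = \alpha/2$, hence $a = 1-\alpha/2$, as claimed.

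I do not expect a serious obstacle here: the argument is essentially bookkeeping, and the only real content is the observation that row sums and column sums of the share matrix have equal totals. The one point to mention for completeness is the degenerate case in which some peer has $U_i+D_i=0$; there the second branch of (\ref{equ1}) forces $x_i = 1-\alpha/2$ directly, which is consistent with, and in fact dictates, the same value of $a$.
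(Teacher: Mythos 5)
Your proof is correct and follows essentially the same route as the paper: the paper clears the denominator, writes the system in matrix form, and pre-multiplies by $\mathbf{e^T}$, which is exactly your summation over $i$, with $\mathbf{e^T s e}=\mathbf{e^T s^T e}=T$ playing the role of your $\sum_i U_i=\sum_i D_i=S$. The only cosmetic difference is that you cancel $a$ from the ratio at the outset while the paper carries an $a^2$ term and divides by $a$ at the end.
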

\begin{proof}
If $\mathbf{e_i.s.x + e_i. s^T.x} \neq 0$, equation \ref{equ1} can be written as 
\[x_i(\mathbf{e_i.s.x + e_i. s^T.x}) = \alpha\mathbf{e_i.s.x}+ (1-\alpha)(\mathbf{e_i.s.x + e_i. s^T.x})\] 
\[\Rightarrow x_i(\mathbf{e_i.s.x + e_i. s^T.x}) = \mathbf{e_i.s.x}+ (1-\alpha)\mathbf{e_i. s^T.x}\] for $i=1,2,...N$. Hence, above relation can be written in the form of matrix as follows.
\[ \mathbf{diag(x)(s+s^T)x=sx} + (1-\alpha)\mathbf{s^Tx}.\]
Here $\mathbf{diag(x)}$ is $NXN$ diagonal matrix with its $ii^{th}$ element as $x_i$. Now if $\mathbf{x}=a\mathbf{e}$ then
\[a\mathbf{I(s+s^T)}a\mathbf{e}= a\mathbf{s.e} +(1-\alpha)a \mathbf{s^T.e}\] 
\[\Rightarrow a^2\mathbf{(s+s^T)}\mathbf{e}= a(\mathbf{s.e} +(1-\alpha) \mathbf{s^T.e})\] Pre-multiply by $\mathbf{e^T}$ on both side
\[a^2\mathbf{e^T(s+s^T)}\mathbf{e}= a(\mathbf{e^T.s.e} +(1-\alpha)\mathbf{e^T.s^T.e})\] 
\[\Rightarrow a^2\mathbf{(e^T.s.e+e^T.s^T.e)} = a(\mathbf{e^T.s.e} +(1-\alpha)\mathbf{e^T.s^T.e})\] for any matrix $\mathbf{s}$, $\mathbf{e^T.s.e}$ will be the sum of all of  its elements. Hence $\mathbf{e^T.s.e}=\mathbf{e^T.s^T.e}=T$, and above expression can be written as \[a^2(T+T)=a(T+(1-\alpha)T)\] since  $a\in [(1-\alpha), 1]$,  hence \[a(2T)=(2-\alpha)T\] since  $T \neq 0$, hence $a=(1-\alpha/2)$
 \end{proof}
\begin{lemma}\label{lemma3}
If $\mathbf{x}=(1-\alpha/2)\mathbf{e}$ is the solution of biased contribution index then  \[\mathbf{(s^T-s)e_i^{T}} \perp \mathbf{e}  \hspace{5mm}\forall i.\]
\end{lemma}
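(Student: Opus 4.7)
My plan is to work directly from equation \ref{equ1}, substitute the claimed solution $\mathbf{x} = (1-\alpha/2)\mathbf{e}$ componentwise, and reduce the resulting scalar identity to the desired orthogonality relation.

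First, I would observe that since $\mathbf{x}$ is a positive multiple of $\mathbf{e}$ and $\mathbf{s}$ is non-negative (and non-trivial in the relevant rows), the denominator $\mathbf{e_i.s.x + e_i. s^T.x}$ is non-zero for each $i$, so the upper branch of equation \ref{equ1} applies. Substituting $\mathbf{x} = (1-\alpha/2)\mathbf{e}$ factors the scalar $(1-\alpha/2)$ out of both numerator and denominator of the fraction, leaving the fixed-point condition
\[
1 - \alpha/2 \;=\; \alpha\,\frac{\mathbf{e_i.s.e}}{\mathbf{e_i.s.e + e_i. s^T.e}} \;+\; (1-\alpha).
\]

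Next, I would rearrange this identity: moving $(1-\alpha)$ to the left gives $\alpha/2$ on the left, and dividing through by $\alpha$ yields $\tfrac{1}{2} = \mathbf{e_i.s.e}/(\mathbf{e_i.s.e + e_i. s^T.e})$, i.e.\
\[
\mathbf{e_i.s.e} \;=\; \mathbf{e_i. s^T.e} \qquad \text{for every } i = 1,2,\ldots,N.
\]
This equality is the key intermediate fact: the $i$-th row sum of $\mathbf{s}$ matches the $i$-th row sum of $\mathbf{s}^T$.

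Finally, I would rewrite this scalar identity as $\mathbf{e_i.(s^T - s).e} = 0$ and take the transpose. Since the scalar $\mathbf{e_i.(s^T - s).e}$ equals its own transpose $\mathbf{e^T.(s - s^T).e_i^T} = -\mathbf{e^T.(s^T - s).e_i^T}$, vanishing on one side forces $\mathbf{e^T.(s^T - s).e_i^T} = 0$, which is precisely the statement $(\mathbf{s^T - s})\mathbf{e_i^T} \perp \mathbf{e}$ for every $i$. There is really no hard step here; the only thing to be careful about is the well-definedness of the fraction when substituting (handled by Lemma \ref{lemma0} type reasoning that the denominator is positive) and keeping the row/column conventions of $\mathbf{e_i}$ versus $\mathbf{e_i^T}$ straight when taking the transpose at the end.
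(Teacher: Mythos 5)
Your proposal is correct and follows essentially the same route as the paper: substitute $\mathbf{x}=(1-\alpha/2)\mathbf{e}$ into equation \ref{equ1}, cancel the common factor $(1-\alpha/2)$, rearrange to $\mathbf{e_i.s.e}=\mathbf{e_i.s^T.e}$, and transpose to obtain $\mathbf{(s^T-s)e_i^{T}} \perp \mathbf{e}$. Your explicit remark about the denominator being non-zero is a small point of care the paper leaves implicit, but the argument is otherwise identical.
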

\begin{proof}
Substituting $\mathbf{x}=(1-\alpha/2)\mathbf{e}$ in equation \ref{equ1}
\[(1-\alpha/2)=\alpha \frac{(1-\alpha/2)\mathbf{e_i.s.e}}{(1-\alpha/2)(\mathbf{e_i.s.e + e_i. s^T.e})} + (1-\alpha)\]
\[\Rightarrow \hspace{5mm} \alpha/2=\alpha \frac{\mathbf{e_i.s.e}}{(\mathbf{e_i.s.e + e_i. s^T.e})} \hspace{24mm}\]
\[\Rightarrow \hspace{5mm} \alpha(\mathbf{e_i.s.e + e_i. s^T.e})=2\alpha\mathbf{e_i.s.e} \hspace{19mm}\]
\[\Rightarrow \hspace{5mm} \alpha \mathbf{e_i(s-s^T).e = 0} \hspace{37mm}\] 
\[\Rightarrow \hspace{5mm}\alpha \mathbf{((s^T-s)e_i^T)^T.e = 0} \hspace{31mm}\] 
Since $\alpha \neq 0$, hence
\[\mathbf{(s^T-s)e_i^{T}} \perp \mathbf{e}  \hspace{5mm}\forall i\]
\end{proof}
\begin{lemma}\label{lemma4}
If $\mathbf{se=s^Te}$, then the biased contribution index vector, $\mathbf{x}=(1-\alpha/2)\mathbf{e}.$
\end{lemma}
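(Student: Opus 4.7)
My plan is to prove Lemma \ref{lemma4} by direct substitution, verifying that $\mathbf{x}=(1-\alpha/2)\mathbf{e}$ is a fixed point of the defining relation \eqref{equ1}, and then invoking Lemma \ref{lemma1} to conclude that it is the unique solution.

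First I would substitute the candidate $\mathbf{x}=(1-\alpha/2)\mathbf{e}$ into the right-hand side of \eqref{equ1} for each index $i$. The scalar factor $(1-\alpha/2)$ appears in both numerator and denominator and cancels, so the ratio reduces to $\mathbf{e_i.s.e}/(\mathbf{e_i.s.e + e_i.s^T.e})$. The key observation is that $\mathbf{e_i.s.e}$ is simply the $i$-th entry of the column vector $\mathbf{s.e}$, and $\mathbf{e_i.s^T.e}$ is the $i$-th entry of $\mathbf{s^T.e}$. Thus the hypothesis $\mathbf{s.e}=\mathbf{s^T.e}$ gives $\mathbf{e_i.s.e} = \mathbf{e_i.s^T.e}$ for every $i$, so the ratio equals $1/2$.

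Plugging this back in, the right-hand side of \eqref{equ1} evaluates to $\alpha\cdot(1/2)+(1-\alpha)=1-\alpha/2$, which coincides with $x_i$. This holds for every $i$, so $\mathbf{x}=(1-\alpha/2)\mathbf{e}$ indeed satisfies \eqref{equ1}. Note also that under the hypothesis $\mathbf{s.e}=\mathbf{s^T.e}$ and irreducibility of $\mathbf{s}$, the denominator $\mathbf{e_i.s.e + e_i.s^T.e}$ is nonzero for every $i$, so the first branch of \eqref{equ1} applies throughout.

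Finally, by Lemma \ref{lemma1} the iteration $\mathbf{x^k}=\boldsymbol{\phi}(\mathbf{x^{k-1}})$ converges to the solution of \eqref{equ1}, so the fixed point is unique; hence $\mathbf{x}=(1-\alpha/2)\mathbf{e}$ is the biased contribution index vector whenever $\mathbf{s.e}=\mathbf{s^T.e}$. There is no real obstacle here beyond carefully matching the vector-entry notation $\mathbf{e_i.s.e}$ to the coordinates of $\mathbf{s.e}$ and using the symmetry hypothesis; the lemma is essentially a sanity check confirming the interpretation of BCI in the globally balanced case (every peer uploads as much as it downloads in total), and it is the natural converse to the direction one would expect from Lemma \ref{lemma3}.
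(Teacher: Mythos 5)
Your verification step is correct: substituting $\mathbf{x}=(1-\alpha/2)\mathbf{e}$ into equation \eqref{equ1}, cancelling the scalar, and using $\mathbf{e_i.s.e}=\mathbf{e_i.s^T.e}$ (the $i$-th entries of $\mathbf{s.e}$ and $\mathbf{s^T.e}$) does give $\alpha/2+(1-\alpha)=1-\alpha/2$ for every $i$, and irreducibility does guarantee the denominators are nonzero. However, this is a genuinely different route from the paper. You establish that $(1-\alpha/2)\mathbf{e}$ \emph{is a} fixed point and then outsource the rest to uniqueness via Lemma \ref{lemma1}; in effect you run the computation of Lemma \ref{lemma3} in reverse. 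The paper instead works forward from an \emph{arbitrary} solution: it writes the fixed-point system as $\mathbf{diag(x)(s+s^T)x=sx}+(1-\alpha)\mathbf{s^Tx}$, pre-multiplies by $\mathbf{e^T}$, uses the hypothesis in the form $\mathbf{e^Ts=e^Ts^T}$ to collapse the right-hand side to $(1-\alpha/2)\mathbf{e^T(s+s^T)x}$, and concludes from $[\mathbf{x^T}-(1-\alpha/2)\mathbf{e^T}]\mathbf{(s+s^T)x=0}$ that $\mathbf{x}=(1-\alpha/2)\mathbf{e}$. What each approach buys: yours is shorter and more transparent, but it is only as strong as the uniqueness claim, and Lemma \ref{lemma1} proves convergence only by a local linearization around the solution, so treating it as a global uniqueness theorem is a leap you should at least flag. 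The paper's approach avoids invoking uniqueness explicitly, though its own last step (deducing that a vector vanishes from a single scalar orthogonality relation against the positive vector $\mathbf{(s+s^T)x}$) is not airtight either, since the entries of $\mathbf{x}-(1-\alpha/2)\mathbf{e}$ could in principle have mixed signs. So your proposal is acceptable at the paper's level of rigor, but you should state explicitly that you are assuming the solution of \eqref{equ1} is unique rather than leaving it implicit in the phrase ``the fixed point is unique.''
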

\begin{proof}
$\mathbf{se=s^Te \Rightarrow e^Ts^T=e^Ts}$, now from equation \ref{equ1}, if $\mathbf{e_i.s.x + e_i. s^T.x} \neq 0$.\\
\[ \mathbf{diag(x)(s+s^T)x=sx} + (1-\alpha)\mathbf{s^Tx}\] Pre-multiply  by $\mathbf{e^T} $ on both side
\[ \mathbf{e^Tdiag(x)(s+s^T)x=e^Tsx} + (1-\alpha)\mathbf{e^Ts^Tx}\] 
\[\Rightarrow \mathbf{x^T(s+s^T)x=e^Tsx} -(\alpha/2)\mathbf{e^Ts^Tx}+ (1-\alpha/2)\mathbf{e^Ts^Tx}\]
\[\Rightarrow \mathbf{x^T(s+s^T)x=e^Tsx} -(\alpha/2)\mathbf{e^Tsx}+ (1-\alpha/2)\mathbf{e^Ts^Tx}\]
\[\Rightarrow \mathbf{x^T(s+s^T)x}=(1-\alpha/2)\mathbf{e^Tsx}+ (1-\alpha/2)\mathbf{e^Ts^Tx}\hspace{9mm}\]
\[\Rightarrow \mathbf{x^T(s+s^T)x}=(1-\alpha/2)\mathbf{e^T(s+s^T)x} \hspace{26mm}\]
\[\Rightarrow [\mathbf{x^T}-(1-\alpha/2)\mathbf{e^T}]\mathbf{(s+s^T)x=0} \hspace{36mm}\]It is clear that $(\mathbf{s+s^T})$ is non negative matrix and $\mathbf{x}>0$ hence $\mathbf{(s+s^T)x \neq 0}$. Hence
$\mathbf{x}=(1-\alpha/2)\mathbf{e}.$

\end{proof}
\section{Analysis of biased contribution index}\label{analysis}

\subsection{Solution Of Free Riding and Collusion}
Initially every peer is allowed to take some resources from the network; otherwise process of sharing will not start. Hence, initial BCI of $(1-\alpha/2)$ for each peer is justified. But as soon as  BCI is updated, free rider's BCI will reach at minimum level. Because for any free rider $i$, $\mathbf{e_i.s.x}=0$, hence from equation \ref{equ1},  $x_i=(1-\alpha)$. This will disqualify them from  taking any resources from the network in future, until they acquire  sufficient BCI.\par The contributing peer will always gain the BCI and resource taking peer will always loose the BCI. Hence, peers will always avoid reporting the false transaction and thus, collusion can be avoided in the network.
\subsection{Justification For Fairness}
We can observe the following from the above discussion.\\
1) If the biased contribution index of all peers are same, then the amount of resources contributed will be same as what is taken from the network for each peer.\\
2) If resources contributed and resources taken from the network in each peer are   same, then the biased contribution index of all peers will be same.\par 
First point is evident from  section \ref{solution}. That is, if biased contribution index of all peers are same then it will be $(1-\alpha/2)$ (see Lemma \ref{lemma2}). And if biased contribution index will be $(1-\alpha/2)$ then $\mathbf{(s^T-s)e_i^{T}} \perp \mathbf{e} $  for all $i$ (from Lemma \ref{lemma3}).  Hence \[\mathbf{e_i.(s-s^T).e=0} \hspace{5mm} \forall i\] \[\Rightarrow \hspace{2mm}\mathbf{e_i.s.e=e_i.s^T.e} \hspace{5mm} \forall i\] Hence total amount of resources contributed and taken from the network by each peer will be same.\par 
Second point can be understood directly from Lemma \ref{lemma4}. If resources contributed and taken from the network in each peer are  same then
\[\mathbf{e_i.s.e=e_i.s^T.e} \hspace{5mm} \forall i\]
Hence  $\mathbf{se=s^Te}$ and $\mathbf{x}=(1-\alpha/2)\mathbf{e}.$ 
Hence biased contribution index of all peers will be same.
\subsection{Implementation In Distributed System}
Calculation of biased contribution index can be implemented in a distributed system following the same approach as in  \cite{abs} \cite{eigen} \cite{global}. Multiple other  peers named  index managers, can be assigned to maintain the record of biased contribution index of any peer. Whenever any peer needs the biased contribution index, of other peers it can send the query to the respective index managers. If there is any conflict about the biased contribution index of a peer, it can be settled, by majority of voting by index managers. In this way we can avoid the collusion among peers. For calculation of biased contribution index of any peer, index manager needs to know the contribution and resource taken by that peer and biased contribution index of peer with whom it is transacting. Calculation is repeated till the convergence of biased contribution index is achieved. If number of iterations required  to converge the algorithm are less, then required number of update messages will also be less. Therefore the  algorithm can be implemented with less overhead. In next section, we will compare the speed of convergence of our method with the other algorithm \cite{global}.

\section{Numerical Result}\label{result}
We considered the share matrix $\mathbf{s}$ as 
$$\quad
\begin{bmatrix} 
0 & 100&50&20 \\
20&0&30&40\\
10&40&0&50\\
50&10&60&0
 
\end{bmatrix}
\quad.$$ 

\begin{table}
\begin{center}
\caption{Biased Global Contribution index for, $\alpha=0.8$  in each iteration}\label{table1}
\begin{tabu} to 0.4\textwidth { | X[c] | X[c] | X[c] | X[c] | X[c]| X[c] | } 
 \hline
   $i$& 1& 2& 3& 4\\[1ex] 
 \hline
 $\mathbf{ x^0}$& 0.6000  &  0.6000 &   0.6000  &  0.6000\\
    \hline
$\mathbf{ x^1}$& 0.7440    & 0.5000  &  0.5333  &  0.6174\\
    \hline
 $\mathbf{ x^2}$&0.7266    & 0.4823  &  0.5161  &  0.6373\\
    \hline
 $\mathbf{ x^3}$&0.7202    & 0.4861  &  0.5170  &  0.6379\\
    \hline
 $\mathbf{ x^4}$&0.7207    & 0.4870  &  0.5177  &  0.6371\\
    \hline
 $\mathbf{ x^5}$&0.7210    & 0.4869  &  0.5177  &  0.6370\\
    \hline
 $\mathbf{ x^6}$&0.7210    & 0.4868  &  0.5177  &  0.6370\\
    \hline
 $\mathbf{ x^7}$&0.7210    & 0.4868  &  0.5177  &  0.6370\\

  \hline
\end{tabu}
\end{center}
\end{table}

\begin{table}
\begin{center}
\caption{Biased Global Contribution index for, $\alpha=0.4$  in each iteration}\label{table2}
\begin{tabu} to 0.4\textwidth { | X[c] | X[c] | X[c] | X[c] | X[c]| X[c] | } 
 \hline
  $i$  & 1& 2& 3& 4\\[1ex] 
 \hline
 $\mathbf{ x^0}$& 0.8000 &   0.8000 &    0.8000 &   0.8000\\
    \hline
$\mathbf{ x^1}$& 0.8720   & 0.7500  &  0.7667   & 0.8087\\
    \hline
 $\mathbf{ x^2}$&0.8690   & 0.7465   & 0.7634   & 0.8124\\
    \hline
 $\mathbf{ x^3}$&0.8685   & 0.7468   & 0.7634   & 0.8124\\
    \hline
 $\mathbf{ x^4}$&0.8686   & 0.7468   & 0.7635   & 0.8124\\
    \hline
 $\mathbf{ x^5}$&0.8686   & 0.7468   & 0.7635   & 0.8124\\
   
  \hline
\end{tabu}
\end{center}
\end{table}

\begin{table}
\begin{center}
\caption{Number of iterations, required to converge the BCI and GC \cite{global} for different values of $\alpha$ and $\beta$(parameter $\beta$ is defined as in \cite{global})  }\label{table3}
\begin{tabular}{ | m{1cm} | m{12em}| m{8em} | } 
 \hline
 & Number of iteration required in GC\cite{global}& Number of iteration required in BCI \\[1ex] 
 \hline
  &  $\beta=0.8$, $Iterations =9$ &   \\
    \cline{2-2}
$\alpha=0.9 $& $\beta=0.5$, $Iterations=10$  &   $Iterations =8$ \\
    \cline{2-2}
  &$\beta=0.2$, $Iterations=10$  &    \\
    \hline
&  $\beta=0.8$, $Iterations=10$ &   \\
    \cline{2-2}
$\alpha=0.8 $& $\beta=0.5$, $Iterations=8$  &   $Iterations =7$ \\
    \cline{2-2}
  &$\beta=0.2$, $Iterations=11$  &    \\
    \hline
    &  $\beta=0.8$, $Iterations=9$ &   \\
    \cline{2-2}
$\alpha=0.7 $& $\beta=0.5$, $Iterations=8$  &   $Iterations =7$ \\
    \cline{2-2}
  &$\beta=0.2$, $Iterations=9$  &    \\
    \hline
    &  $\beta=0.8$, $Iterations=8$ &   \\
    \cline{2-2}
$\alpha=0.6 $& $\beta=0.5$, $Iterations=7$  &   $Iterations =6$ \\
    \cline{2-2}
  &$\beta=0.2$, $Iterations=8$  &    \\
    \hline
    &  $\beta=0.8$, $Iterations=7$ &   \\
    \cline{2-2}
$\alpha=0.5 $& $\beta=0.5$, $Iterations=6$  &   $Iterations =5$ \\
    \cline{2-2}
  &$\beta=0.2$, $Iterations=8$  &    \\
    \hline
    &  $\beta=0.8$, $Iterations=7$ &   \\
    \cline{2-2}
$\alpha=0.4 $& $\beta=0.5$, $Iterations=6$  &   $Iterations =5$ \\
    \cline{2-2}
  &$\beta=0.2$, $Iterations=7$  &    \\
    \hline
    &  $\beta=0.8$, $Iterations=6$ &   \\
    \cline{2-2}
$\alpha=0.3 $& $\beta=0.5$, $Iterations=5$  &   $Iterations =4$ \\
    \cline{2-2}
  &$\beta=0.2$, $Iterations=6$  &    \\
    \hline
    &  $\beta=0.8$, $Iterations=5$ &   \\
    \cline{2-2}
$\alpha=0.2 $& $\beta=0.5$, $Iterations=5$  &   $Iterations =3$ \\
    \cline{2-2}
  &$\beta=0.2$, $Iterations=6$  &    \\
    \hline
\end{tabular}
\end{center}
\end{table}

\subsection{Speed of Convergence}
The number of iterations required for convergence of the biased contribution index were estimated for two different values of $\alpha$. For $\alpha= 0.8$, BCI in each step is shown in Table \ref{table1}. We can see that it converges in seven iterations. For $\alpha=0.4$ (see Table \ref{table2}), it converge only in five iterations. Thus impact of $\alpha$ is clearly evident in the results.
\subsection{Comparison of Convergence Speed With GC\cite{global}}
We compared the number of iterations required for convergence of the biased contribution index and global contribution \cite{global}. In latter case, we have taken the  different values of $\alpha$ and $\beta$ as defined in \cite{global}. Results are shown in Table \ref{table3}. We can observe that the number of iterations required for convergence of BCI is always lesser than the global contribution mentioned in \cite{global}. 

\section{Conclusion}\label{con}
In this work, we propose a new metric,  the biased contribution index, to evaluate the  contributions of the peers in the network. Using this metric we can discourage  the free riding in the network. We can also ensure the balance between the total upload and download by a node  in the network. We compared our method with another existing approach \cite{global}. With the help of numerical example, we have shown that our metric converges in lesser number of iterations compared to the global contribution approach given in \cite{global}. Our approach can also be implemented in a distributed system and is much simpler than the other existing approach.


%





\ifCLASSOPTIONcaptionsoff
  \newpage
\fi

\end{document}